\newcommand{\abs}[1]{\left| #1 \right|}
\newcommand{\okra}[1]{\left( #1 \right)}
\newcommand{\kwad}[1]{\left[ #1 \right]}
\newcommand{\klam}[1]{\left\{ #1 \right\}}
\newcommand{\floor}[1]{\left\lfloor #1 \right\rfloor}
\DeclareMathOperator*{\hilberg}{hilb}
\DeclareMathOperator{\esssup}{ess\, sup}
\DeclareMathOperator{\card}{card}
\DeclareMathOperator{\cov}{Cov}
\DeclareMathOperator{\var}{Var}
\DeclareMathOperator{\mean}{\mathbf{E}}
\DeclareMathOperator{\median}{\mathbf{M}}
\newcommand{\boole}[1]{{\bf 1}{\klam{#1}}}
\newtheorem{definition}{Definition}
\newtheorem{example}{Example}
\newtheorem{proposition}{Proposition}
\newtheorem{theorem}{Theorem}
\newtheorem{lemma}{Lemma}
\newenvironment*{proof}{\begin{trivlist}\item[]
\noindent\textbf{Proof:}}{$\Box$\par\end{trivlist}}
\newenvironment*{proof*}[1]{\begin{trivlist}\item[]
\noindent\textbf{Proof of #1:}}{$\Box$\par\end{trivlist}}
\author{{\L}ukasz D\k{e}bowski\thanks{
    {\L}. D\k{e}bowski is with
    the Institute of Computer Science, Polish Academy of Sciences, 
    ul. Jana Kazimierza 5, 01-248 Warszawa, Poland 
    (e-mail: ldebowsk@ipipan.waw.pl).
  }
  % \thanks{Manuscript received ... ; revised ... .}
  %\thanks{}  
}
\title{Repetition and recurrence times: \\ Dual statements and
  summable mixing rates} \date{}
\begin{document}

\begin{titlepage}
%\titlepage
\maketitle

\begin{abstract}
  By an analogy to the duality between the recurrence time and
  the longest match length, we introduce a quantity dual to the
  maximal repetition length, which we call the repetition time.
  Extending prior results, we sandwich the repetition time in terms of
  unconditional and conditional min-entropies. The upper bound holds
  if the mixing rate $\phi(n)$ is summable, whereas the lower bound
  only assumes stationarity.  Our reasoning makes a repeated use of
  dualities between so-called times and so-called counts that
  generalize the duality of the recurrence time and the longest match
  length.  We also discuss the analogy of these results with the
  Wyner-Ziv/Ornstein-Weiss theorem, which sandwiches the recurrence
  time in terms of Shannon entropies.
% \end{abstract}
%
% \begin{keywords}
  \\[1em]
  \textbf{Keywords:} stationary processes; recurrence time; maximal
  repetition; R\'enyi entropies; mixing rates; time-count
  duality
  % \end{keywords}
  \\[1em]
  \textbf{MSC2020:} 60G10, 94A17
\end{abstract}

% %MSC 2020: 
% % 94A29 - Source coding
% % 60G10 - Stationary processes
% % 94A17 - Measures of information, entropy

\end{titlepage}

\section{Introduction}
\label{secIntroduction}

Using appropriate concepts is often a gate to a simple theory. A
modest instance of this rule will be exhibited in this article. The
paper concerns stochastic bounds for the maximal repetition length and
its newly introduced dual---called the repetition time.  Our
development will be analogous to the known duality between the
recurrence time and the longest match length \cite{WynerZiv89}.
Whereas the repetition time is defined in this paper, there is a long
history of investigation of the recurrence time
\cite{Kac47,WynerZiv89,OrnsteinWeiss93,Kontoyiannis98,Ko12}, the
longest match length \cite{WynerZiv89,OrnsteinWeiss93,Szpankowski93,
  KontoyiannisOther98,GaoKontoyiannisBienenstock08}, and the maximal
repetition length
\cite{DeLuca99,KolpakovKucherov99a,KolpakovKucherov99,CrochemoreIlie08,
  ErdosRenyi70,ArratiaWaterman89,Shields92b,Shields97,
  Szpankowski93,Szpankowski93a,KontoyiannisSuhov94,
  Debowski11b,Debowski17,Debowski18b}. These concepts appear in
various contexts in ergodic theory, information theory, theoretical
computer science, and quantitative linguistics, as we have indicated
by the references and we will detail further.

To be concrete, let $(X_i)_{i\in\mathbb{Z}}$ be a stochastic process
over a countable alphabet. We denote blocks
$X_j^k:=(X_j,X_{j+1},\ldots,X_k)$. The well-researched recurrence time
$R^{(1)}_k$ is the first position in the future on which a copy of
string $X_1^k$ reappears. By contrast, we define the repetition time
$R^{(2)}_k$ as the first position in the future on which a copy of any
previous string $X_{j+1}^{j+k}$ occurs. In formulas,
\begin{align}
  R^{(1)}_k&:=\inf\klam{\infty}\cup\klam{i\ge 1: X_{i+1}^{i+k}=X_1^k},
  \\
  R^{(2)}_k&:=\inf\klam{\infty}\cup\klam{i\ge 1:
  X_{i+1}^{i+k}=X_{j+1}^{j+k}\text{ for some } 0\le j<i}.
\end{align}
If $(X_i)_{i\in\mathbb{Z}}$ is a process over a $D$-ary alphabet then
$R^{(2)}_k\le D^k$. In any case, $R^{(2)}_k\le R^{(1)}_k$. Precisely,
we have
% \begin{align}
%   \label{Minimum}
$R^{(2)}_k=\inf_{i\ge 0}\okra{i+R^{(1)}_k\circ T^i}$
% \end{align}
for the shift automorphism $X_i\circ T=X_{i+1}$, since observing a
repetition is equivalent to observing a recurrence of any string.

In parallel, we can discuss variables that are dual to random
variables $R^{(\gamma)}_k$. The longest match length $L^{(1)}_n$
and the maximal repetition length $L^{(2)}_n$ are
\begin{align}
  L^{(1)}_n&:=\max\klam{0}\cup\klam{k\ge 1: X_{i+1}^{i+k}=X_1^k
  \text{ for some } 0< i\le n-k},
  \\
  L^{(2)}_n&:=\max\klam{0}\cup\klam{k\ge 1: X_{i+1}^{i+k}=X_{j+1}^{j+k}
  \text{ for some } 0\le j< i\le n-k}.
\end{align}
Both lengths $L^{(\gamma)}_n$ are functions of prefix $X_1^n$ and we have
$L^{(1)}_n\ge L^{(2)}_n\le n$. For $\gamma\in\klam{1,2}$, we have the
dualities
\begin{align}
  %\label{DualityOne}
  R^{(\gamma)}_k&=\inf\klam{\infty}\cup\klam{n\ge 1: L^{(\gamma)}_{n+k}\ge k},
  &
  %\label{DualityTwo}
  L^{(\gamma)}_n&=\max\klam{0}\cup\klam{k\ge 1: R^{(\gamma)}_k\le n-k}.
\end{align}
Hence $R^{(\gamma)}_k>n\iff L^{(\gamma)}_{n+k}<k$
\cite{WynerZiv89}. Thus the sequence of the dual random variables
carries the same information as the sequence of the primal ones. Both
$R^{(\gamma)}_k$ and $L^{(\gamma)}_n$ are non-decreasing functions of
$k$ and $n$, respectively.

Let us describe some related received theoretical results in the
stochastic setting. We begin with the recurrence time and the longest
match length.  Let $\log x$ denote the natural logarithm of $x$.
Consider a discrete random variable $X$. The Shannon entropy is
$H_1(X):=\mean\kwad{-\log P(X)}$. Let $(X_i)_{i\in\mathbb{Z}}$ be a
stationary ergodic process over a finite alphabet. The celebrated
Wyner-Ziv/Ornstein-Weiss (WZ/OW) theorem
\cite{WynerZiv89,OrnsteinWeiss93} asserts
\begin{align}
  \label{WZOW}
  \lim_{k\to\infty} \frac{\log n}{L^{(1)}_n}
  =\lim_{k\to\infty} \frac{1}{k}\log R^{(1)}_k
  =h_1:=\lim_{k\to\infty} \frac{H_1(X_1^k)}{k}
  \text{ a.s.}
\end{align}
Thus the rates of the recurrence time and the longest match length are
essentially dictated by the rate of the Shannon entropy.

Let us proceed to the maximal repetition length and the repetition
time.  For memoryless sources, the maximal repetition length
$L^{(2)}_n$ grows at a similar rate as the longest match length
$L^{(1)}_n$ \cite{ErdosRenyi70,ArratiaWaterman89}. Probably for this
reason, the problem that $L^{(2)}_n$ can grow much faster than
$L^{(1)}_n$ for other stationary processes was first recognized as
late as by Shields \cite{Shields92b,Shields97}. Define the collision
entropy as $H_2(X):=-\log\mean P(X)$, where $H_2(X)\le H_1(X)$ by the
Jensen inequality. We also denote the $\psi$-mixing rate
\begin{align}
  \psi(n):=
  \sup_{A\in\mathcal{X}_{-\infty}^0,B\in\mathcal{X}_n^{\infty}}
  \abs{\frac{P(A\cap B)}{P(A)P(B)}-1},
\end{align}
where $\mathcal{X}_j^k$ is the $\sigma$-field generated by block
$X_j^k$.  A technically complicated result by Szpankowski
\cite[Theorem 2(ii)]{Szpankowski93}, \cite{Szpankowski93a}, involving
the height of suffix trees, can be rewritten as
\begin{align}
  \label{Szpankowski}
  \lim_{k\to\infty} \frac{\log n}{L^{(2)}_n}
  =\lim_{k\to\infty} \frac{1}{k}\log R^{(2)}_k
  =h_2:=\lim_{k\to\infty} \frac{H_2(X_1^k)}{k}
  \text{ a.s.}
\end{align}
provided we have condition
\begin{align}
  \label{PsiSquareSummable}
  \sum_{n=1}^\infty \kwad{\psi(n)}^2<\infty.
\end{align}
Mind that Szpankowski \cite{Szpankowski93} denoted the mixing rate
$\psi(n)$ as $\alpha(n)$, whereas we use the nomenclature by Bradley
\cite{Bradley05}.  Whereas the Shannon entropy rate $h_1$ exists for
any stationary process, the collision entropy rate $h_2$ exists under
condition (\ref{PsiSquareSummable}) in particular. The collision
entropy rate can be substantially smaller than the Shannon entropy
rate.

Condition (\ref{PsiSquareSummable}) is satisfied by any finite-state
aperiodic Markov chain. In this case, mixing rate $\psi(n)$ decays
exponentially \cite[Theorems 3.1]{Bradley05}. However, condition
(\ref{PsiSquareSummable}) does not cover all important cases---such as
some stationary sources that do not satisfy the finite-energy
condition \cite{Shields92b,Shields97}, do not meet the Doeblin
condition \cite[Theorem 11.7]{Debowski21}, or are not diluted by
random noise \cite{Shields97}, \cite[Theorem 11.6]{Debowski21}. Here
we admit that we became interested in characterizing the maximal
repetition length because of our long-standing interests in
statistical language modeling \cite{Debowski21}.  In fact, for natural
language, experiments suggest that the Shannon entropy rate is
positive, $h_1>0$, \cite{Shannon51,CoverKing78} but the Shannon mutual
information between two blocks of an increasing length grows like a
power-law, called Hilberg's law or the neural scaling law
\cite{Hilberg90, TakahiraOther16, KaplanOther20, Debowski21}, which
is incompatible with condition $\psi(1)<\infty$. Moreover, as
discovered in \cite{Debowski12b,Debowski15f} for novels in English,
German, and French, we have a power-law logarithmic growth of the
maximal repetition length
\begin{align}
  L^{(2)}_n\propto \log^\alpha n,\quad \alpha\approx 3.
\end{align}
This translates to the stretched exponential growth of the repetition
time
% \begin{align}
$\log R^{(2)}_k\propto k^{1/\alpha}$.
% \end{align}

The goal of this paper is to approach theoretical bounds for
repetition and recurrence times by strengthening and beautifying our
earlier theoretical accounts and cumbersome statements
\cite{Debowski15f,Debowski18b,Debowski21}. We will do it by using the
minimal times $R^{(\gamma)}_k$ rather than the maximal lengths
$L^{(\gamma)}_n$.  A few explanations regarding our framework are in
place.

First, the Hilberg exponent of a sequence $(a_k)_{k\in\mathbb{N}}$ is
defined as
\begin{align}
  \label{Exponent}
  \hilberg_{k\rightarrow\infty} a_k
  :=\max\klam{0,\limsup_{k\rightarrow\infty} \frac{\log
  a_k}{\log k}}
  .
\end{align}
It measures the rate of asymptotic power-law growth,
$\hilberg_{k\rightarrow\infty} k^\beta=\max\klam{0,\beta}$.
Properties of the Hilberg exponent are summarized in \cite[Section
8.1]{Debowski21}. We write succinctly
\begin{align}
  a_k\lesssim b_k&\iff
  \hilberg_{k\rightarrow\infty} a_k\le
  \hilberg_{k\rightarrow\infty} b_k,
  &
  a_k\sim b_k&\iff
  \hilberg_{k\rightarrow\infty} a_k=
  \hilberg_{k\rightarrow\infty} b_k.
\end{align}
The Hilberg exponent is useful for bounding the power-law growth of
Shannon mutual information, i.e., Hilberg's law \cite[Chapter
8]{Debowski21}. Hence comes its name.

Second, for a parameter $\gamma\in(0,1)\cup(1,\infty)$, the
R\'enyi-Arimoto conditional entropy of a discrete random variable $X$
given a variable $Y$ is defined as
\begin{align}
  \label{RenyiArimoto}
  H_\gamma(X|Y):=
  -\frac{\gamma}{\gamma-1}\log
  \mean\kwad{\sum_x P(X=x|Y)^{\gamma}}^{1/\gamma}
  ,
\end{align}
whereas for $\gamma\in\klam{0,1,\infty}$, we define
$H_\gamma(X|Y):=\lim_{\delta\rightarrow\gamma} H_\delta(X|Y)$.  We
also put $H_\gamma(X):=H_\gamma(X|1)$. The special cases are called:
\begin{itemize}
\item the Hartley entropy
$H_0(X|Y)= \log \esssup\card\klam{x: P(X=x|Y)>0}$,
\item  the Shannon
entropy $H_1(X|Y)=\mean\kwad{-\log P(X|Y)}$, 
\item the collision entropy
$H_2(X|Y)=-2\log \mean\kwad{\sum_x P(X=x|Y)^2}^{1/2}$,
\item the min-entropy $H_\infty(X|Y)=-\log \mean\max_{x} P(X=x|Y)$.
\end{itemize}
A survey of R\'enyi-Arimoto entropies can be found in
\cite{FehrBerens14}, which was repeated also in \cite[Section
9.1]{Debowski21}. We have $H_\gamma(X)\ge H_\delta(X)$ for
$\gamma<\delta$, whereas
$H_\gamma(X)\le \frac{\gamma}{\gamma-1}H_\infty(X)$ for $\gamma>1$.
Thus all R\'enyi-Arimoto entropies with $\gamma>1$ are of the same
magnitude.  Moreover, we have a generalized chain rule
\begin{align}
  \label{Chain}
  H_\gamma(X|Y,Z)\le H_\gamma(X|Y)\le H_\gamma(U,X|Y)\le
  H_0(U|Y)+H_\gamma(X|Y,U).
\end{align}

Let us proceed to the exposition of the results of this paper.  In the
body of the paper, we will prove the following Theorems
\ref{theoMainOne}--\ref{theoMainThree}, which bound the recurrence and
repetition times comprehensively. The main novelty can be attributed
only to Theorem \ref{theoMainTwo}.
\begin{theorem}
  \label{theoMainOne}
  Consider a stationary ergodic process $(X_i)_{i\in\mathbb{Z}}$ over
  a $D$-ary alphabet. Define $N_k:=\max\klam{N: N\log N\le
    D^k}$. We have a sandwich bound
  \begin{align}
    \label{MainOne}
    \log P(X_1^k|X_{k+1}^{N_k})\lesssim
    \log R^{(1)}_k\lesssim
    -\log P(X_1^k)\lesssim
    H_1(X_1^k)
    \text{ a.s.},
  \end{align}
  where the right-most inequality $\lesssim$ becomes the equivalence
  $\sim$ for
  \begin{align}
    \limsup_{k\to\infty} \frac{\sqrt{\var\log P(X_1^k)}}{H_1(X_1^k)}<1.
  \end{align}
\end{theorem}
\noindent
\emph{Remark:} The WZ/OW theorem (\ref{WZOW}) and the SMB theorem
\cite{Shannon48, Breiman57} imply the sandwich bound (\ref{MainOne})
with all inequalities $\lesssim$ becoming equivalences $\sim$ for
$h_1>0$. We display (\ref{MainOne}) as a separate fact since it also
holds for $h_1=0$. Theorem \ref{theoMainOne} applies the same sandwich
bound by Kontoyiannis \cite{Kontoyiannis98} as the simple proof of the
WZ/OW theorem (\ref{WZOW}) presented in \cite[Theorem
9.12]{Debowski21}.  We also recall that
$h_1=\mean\kwad{-\log P(X_1|X_2^\infty)}$ for a stationary
process. Consequently if $h_1=0$ then $P(X_1^k|X_{k+1}^\infty)=1$
almost surely. The quantity $\var\log P(X_1^k)$ is called varentropy
\cite{KontoyiannisVerdu13}.

For the next theorem, we will apply a concept of short memory that is
somewhat different from that in \cite{Szpankowski93}.  This condition
combines the $\phi$-mixing rather than the $\psi$-mixing condition
\cite{Bradley05} with various ideas of summability of correlations,
usually termed short memory in time series analysis \cite{Beran94}.
\begin{definition}[short memory] For a stationary process
  $(X_i)_{i\in\mathbb{Z}}$, we define counts
\begin{align}
  F_n(x_1^k)&:=\sum_{i=0}^{n-1}\boole{X_{i+1}^{i+k}=x_1^k}.
\end{align}
We say that short memory holds when $\log \gamma_k\sim 0$,
where
\begin{align}
  \label{GammaDelta}
  \gamma_k&:=\sup_{n\in\mathbb{N}}\max_{x_1^k}
            \frac{\var F_n(x_1^k)}{\mean F_n(x_1^k)}
            .
\end{align}
\end{definition}
\noindent
\emph{Remark:} Short memory is a different condition than
(\ref{PsiSquareSummable}). It can be easily seen that short memory is
implied by the uniform absolute summability of correlations,
which---using a more abstract language---can be implied by the
summability of the $\phi$-mixing rate
\begin{align}
  \phi(n):=
  \sup_{A\in\mathcal{X}_{-\infty}^0,B\in\mathcal{X}_n^{\infty}}\abs{P(B|A)-P(B)},
\end{align}
defined as in \cite{Bradley05}. Namely, assuming stationarity, we may
bound
\begin{align}
  \gamma_k
  &\le
    \sup_{n\in\mathbb{N}}\max_{x_1^k}
    \frac{\sum_{i=0}^{n-1}\sum_{j=0}^{n-1}
    \abs{\cov(\boole{X_{1}^{k}=x_1^k};\boole{X_{j-i+1}^{j-i+k}=x_1^k})}}{
    \sum_{i=0}^{n-1}\mean\boole{X_{1}^{k}=x_1^k}}
    \nonumber\\
  &\le
    \max_{x_1^k}
    \frac{(2k-1)\var\boole{X_{1}^{k}=x_1^k}
    +
    2\sum_{n=1}^\infty
    \abs{\cov(\boole{X_{1}^{k}=x_1^k};\boole{X_{n+k+1}^{n+2k}=x_1^k})}}{
    \mean\boole{X_{1}^{k}=x_1^k}}
    \nonumber\\
  &\le
    2k-1+2\sum_{n=1}^\infty \phi(n)
    .
  \label{Summable}
\end{align}
Thus short memory is implied by condition
\begin{align}
  \label{PhiSummable}
  \sum_{n=1}^\infty \phi(n)<\infty.
\end{align}
Condition (\ref{PhiSummable}) cannot be directly related to condition
(\ref{PsiSquareSummable}). We have $\phi(n)\le \psi(n)/2$
\cite[Eq. (1.11)]{Bradley05} but $\psi(n)\ge \kwad{\psi(n)}^2$ for all
but finitely many $n$ if $\lim_{n\to\infty} \psi(n)=0$.  Some basic
examples of a process with short memory % (I)
are not only finite-state aperiodic Markov chains or but also
countable Markov chains with $\phi(n)<1/2$ for some $n\ge 1$. In both
cases, mixing rate $\phi(n)$ decays exponentially \cite[Theorems 3.1
and 3.3]{Bradley05}.

Consequently, we may demonstrate this fact.
\begin{theorem}
  \label{theoMainTwo}
  Consider a stationary process $(X_i)_{i\in\mathbb{Z}}$ over a finite
  alphabet. We have
  \begin{align}
    \label{MainTwoOne}
    \text{short memory } % (I)
    &\implies
    \log R^{(2)}_k\lesssim
    H_\infty(X_1^k)
      \text{ a.s.}
  \end{align}
\end{theorem}
\noindent
\emph{Remark 1:} In \cite{Debowski18b,Debowski21}, a trivial bound in
terms of Shannon entropy $H_1(X_1^k)$ stemming from inequality
$L^{(2)}_n\ge L^{(1)}_n$ was discussed instead of the upper bound
(\ref{MainTwoOne}). In a previous preprint version of this article, we
claimed erroneously that $\log R^{(2)}_k$ can be generally upper
bounded by $H_\infty(X_1^k)$. The phenomenon of burstiness, described
in Section \ref{secRepetition} likely excludes such a possibility.
\\
\emph{Remark 2:} The class of short memory processes seems too
restrictive for linguistic applications due to Taylor's law. Taylor's
law is an empirical power law that captures scaling of variance in
natural language
\cite{GerlachAltmann14,KobayashiTanakaIshii2018,TanakaIshii21} and it
may imply insummable correlations. This hinders the application of
bound (\ref{MainTwoOne}) to natural language. Thus we are not fully
satisfied with Theorem \ref{theoMainTwo}.

We notice also a general lower bound for all stationary ergodic
processes. This theorem somewhat strengthens \cite[Theorem
8]{Debowski18b}.
\begin{theorem}
  \label{theoMainThree} 
  Consider a stationary process $(X_i)_{i\in\mathbb{Z}}$ over a
  countable alphabet. Define the context length
  % \begin{align}
  %   \label{ContextLength}
  $I_k:=\min\klam{i:\log i\ge H_\infty(X_1^k|X_{k+1}^{k+i})}$.
  % \end{align}
  We have a lower bound
  \begin{align}
    \label{MainThree}
    H_\infty(X_1^k|X_{k+1}^{k+I_k})\lesssim
    \log I_k\lesssim
    \log R^{(2)}_k
    \text{ a.s.}
    ,
  \end{align}
  where the left-most inequality $\lesssim$ becomes the equivalence
  $\sim$ for a finite alphabet.  
\end{theorem}
\noindent
\emph{Remark:} As for the lower bound (\ref{MainThree}), we may
substitute
\begin{align*}
  I_k\to\min\klam{i:\frac{\gamma}{\gamma-1}\log i\ge
  H_\gamma(X_1^k|X_{k+1}^{k+i})}
\end{align*}
for any $\gamma>1$.  Context length $I_k$ can be much shorter than the
pessimistic context length $D^k$ for a $D$-ary alphabet that was
proved sufficient in \cite[Theorem 8]{Debowski18b}.

The organization of this article is as follows. In Section
\ref{secRecurrence}, we recall the known sandwich bound for the
recurrence time. In Section \ref{secRepetition}, we prove a new upper
bound for the repetition time under short memory. % (I)--(II)
In Section \ref{secYetAnother}, we discuss a lower bound for the
repetition time under plain stationarity.  Concluding, in Section
\ref{secFinal}, we specialize these results to the Hilberg exponents.

\section{Sandwiching the recurrence time}
\label{secRecurrence}

There is a celebrated lemma by Kac, which states that the conditional
expectation of the recurrence time equals the inverse probability of
the initial string.
\begin{lemma}[\cite{Kac47}]
  \label{theoKac}
  For a stationary ergodic process $(X_i)_{i\in\mathbb{Z}}$ over a
  countable alphabet,
  \begin{align}
    \mean\okra{R^{(1)}_k\middle|X_1^k}
    =
    \frac{1}{P(X_1^k)}.
  \end{align}
\end{lemma}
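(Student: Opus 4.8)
The plan is to derive the identity from the classical return-time form of Kac's lemma, applied to the shift $T$ on the underlying probability space $(\Omega,P)$. Since $\mean\okra{R^{(1)}_k\middle|X_1^k}$ is a function of $X_1^k$, it suffices to fix a word $w$ of length $k$ with $P\okra{X_1^k=w}>0$ (words of zero probability are irrelevant) and to show $\mean\okra{R^{(1)}_k\middle|X_1^k=w}=1/P\okra{X_1^k=w}$. Write $A_w:=\klam{X_1^k=w}$ for the corresponding cylinder set. Because $X_{i+1}^{i+k}=X_1^k\circ T^i$, on the event $A_w$ the recurrence time coincides with the first return time of $T$ to $A_w$,
\begin{align}
  R^{(1)}_k=r_{A_w}:=\inf\klam{i\ge 1:T^i x\in A_w}.
\end{align}
Multiplying the target identity through by $P(A_w)$, the claim is thus equivalent to $\int_{A_w}r_{A_w}\,dP=1$, i.e. to Kac's lemma for the ergodic measure-preserving system $(\Omega,P,T)$.

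To prove $\int_{A_w}r_{A_w}\,dP=1$ I would use the Kakutani skyscraper. First, ergodicity together with $P(A_w)>0$ guarantees that almost every $x$ enters $A_w$ under some forward iterate, and the Poincar\'e recurrence theorem gives $r_{A_w}<\infty$ $P$-almost everywhere on $A_w$. Partition $A_w$ by the value of the return time, setting $A_w^{(n)}:=\klam{x\in A_w:r_{A_w}(x)=n}$, so that $A_w=\bigsqcup_{n\ge 1}A_w^{(n)}$ modulo a null set.

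The crucial combinatorial step is that the tower levels $\klam{T^j A_w^{(n)}:n\ge 1,\ 0\le j<n}$ are pairwise disjoint and cover $\Omega$ up to a null set. A point $y$ lies in $T^j A_w^{(n)}$ exactly when its most recent visit to $A_w$ under backward iteration occurred $j$ steps earlier, at a point of $A_w$ whose forward return time equals $n$; this description makes both disjointness (the pair $(n,j)$ is determined by $y$) and the covering (every $y$ with some past-or-present visit to $A_w$ is placed) transparent, the latter set having full measure by ergodicity. Invoking measure-preservation $P\okra{T^j A_w^{(n)}}=P\okra{A_w^{(n)}}$ and summing over the tower,
\begin{align}
  1=P(\Omega)=\sum_{n\ge 1}\sum_{j=0}^{n-1}P\okra{T^j A_w^{(n)}}
   =\sum_{n\ge 1}n\,P\okra{A_w^{(n)}}=\int_{A_w}r_{A_w}\,dP,
\end{align}
which is the desired equality.

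The step I expect to be the main obstacle is the rigorous bookkeeping of this skyscraper, namely the pairwise disjointness of the levels $T^j A_w^{(n)}$ and their full-measure covering of $\Omega$. This is precisely where ergodicity enters, ensuring that the set of points never visiting $A_w$ is null; elsewhere one must keep careful track of the exceptional null sets (orbits that never return or never enter $A_w$). Once the tower is established, the summation and the final division by $P(A_w)$ to recover the conditional expectation are immediate.
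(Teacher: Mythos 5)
Your proof is correct, but note that the paper itself gives no proof of this lemma: it is stated purely as a citation of the classical result of \citet{Kac47}, so there is no in-paper argument to compare against. What you supply is the standard Kakutani-skyscraper proof, and all of its steps check out. On the cylinder $A_w=\klam{X_1^k=w}$ the recurrence time $R^{(1)}_k$ is indeed the first return time $r_{A_w}$, because $X_{i+1}^{i+k}=X_1^k\circ T^i$; the target identity correctly reduces to $\int_{A_w}r_{A_w}\,dP=1$; and your bookkeeping for the tower is sound: the pair $(n,j)$ is recovered from $y$ as the minimal backward hitting time $j=\min\klam{i\ge 0: T^{-i}y\in A_w}$ together with $n=r_{A_w}(T^{-j}y)$, which gives pairwise disjointness, while the covering holds up to the null set of points whose backward orbit misses $A_w$ (null by ergodicity, since $\bigcup_{i\ge 0}T^iA_w$ is invariant mod $P$ and has positive measure). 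Two points you use implicitly are worth making explicit but are unproblematic here: the shift of a two-sided process is an automorphism, which is what justifies $P(T^jE)=P(E)$ for \emph{forward} images and ensures that forward images of the exceptional null sets remain null; and since the alphabet is countable, verifying the identity on each positive-probability cylinder $A_w$ does determine the conditional expectation $\mean\okra{R^{(1)}_k\middle|X_1^k}$ almost surely. Your argument also delivers, as a by-product, the almost-sure finiteness of $R^{(1)}_k$, which the stated identity implicitly requires.
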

There is also a converse result by Kontoyiannis \cite{Kontoyiannis98},
which was strengthened by D\k{e}bowski \cite{Debowski18b} from
conditioning on $X_{k+1}^\infty$ to conditioning on
$X_{k+1}^{D^k}$. Here we shorten the conditioning to $X_{k+1}^{N}$,
where $N<D^k$.
\begin{lemma}[cf.\ \cite{Kontoyiannis98}, \mbox{\cite[Lemma 2]{Debowski18b}}]
  \label{theoKontoyiannis}
  For a process $(X_i)_{i\in\mathbb{Z}}$ over a $D$-ary alphabet and
  an integer constant $N$ where $1\le N\le D^k$,
  \begin{align}
    \label{Kontoyiannis}
    \mean\okra{\frac{1}{\min\klam{R^{(1)}_k,N}
    P(X_1^k|X_{k+1}^N)}\middle|X_{k+1}^N}
    \le
    Z(N,D,k):=1+\log N+\frac{D^k-N}{N}.
  \end{align}
\end{lemma}
\begin{proof}
  Denote the left-hand side of (\ref{Kontoyiannis}) as $Y$.  Define
  the waiting time for string $x_1^k$ as
  % \begin{align}
   $R^{(1)}(x_1^k):=\inf\klam{\infty}\cup\klam{i\ge 1: X_{i+1}^{i+k}=x_1^k}$. 
  % \end{align}
  We observe that for each $i\ge 1$ there is at most one string
  $x_1^k$ such that $R^{(1)}(x_1^k)=i$. Hence, we have a uniform bound
  \begin{align}
    Y
    &=
      \sum_{x_1^k}\frac{1}{\min\klam{R^{(1)}(x_1^k),N}}
      \le
      \sum_{i=1}^{D^k}\frac{1}{\min\klam{i,N}}
      \le
      1+\int_{1}^{D^k}\frac{dx}{\min\klam{x,N}}
      =
      Z(N,D,k).
  \end{align}
\end{proof}

The following sandwich bound essentially appears in the simple proof
of the WZ/OW theorem in \cite[Theorem 9.12]{Debowski21}, which was
originally proved using other techniques, cf.\
\cite{WynerZiv89,OrnsteinWeiss93}. We recall our simpler derivation
for analogy with Section \ref{secRepetition}. Applying Lemma
\ref{theoKontoyiannis}, we slightly shorten the length of conditioning
compared to the results of \cite{Debowski18b}.
\begin{proposition}
  \label{theoSandwichRecurrence}
  Consider a stationary ergodic process $(X_i)_{i\in\mathbb{Z}}$ over
  a $D$-ary alphabet. Define $N_k:=\max\klam{N: N\log N\le D^k}$.
  Assume that $\sum_{k=1}^\infty \rho_k<\infty$ for some $\rho_k>0$.
  Almost surely, for sufficiently large $k$ we have
\begin{align}
  \label{SandwichRecurrence}
  -\log P(X_1^k|X_{k+1}^{N_k})
  +\log \rho_k-\log k
  <
  \log R^{(1)}_k
  <
  -\log P(X_1^k)
  -\log \rho_k
  .
\end{align}
\end{proposition}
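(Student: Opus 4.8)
The plan is to derive both inequalities of (\ref{SandwichRecurrence}) as routine first--Borel--Cantelli arguments, obtaining the right-hand (upper) bound from Kac's Lemma \ref{theoKac} and the left-hand (lower) bound from Kontoyiannis's Lemma \ref{theoKontoyiannis}. In each case I would build a nonnegative random variable whose expectation is controlled by the relevant lemma, apply Markov's inequality with a threshold tuned to $\rho_k$, and then invoke the summability $\sum_k \rho_k<\infty$ to conclude that the complementary (failure) event occurs for only finitely many $k$ almost surely, which is exactly the meaning of ``for sufficiently large $k$''.

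For the upper bound I would first multiply Kac's identity $\mean\okra{R^{(1)}_k\mid X_1^k}=1/P(X_1^k)$ by the $X_1^k$-measurable factor $P(X_1^k)$ and take the total expectation, obtaining $\mean\okra{R^{(1)}_k P(X_1^k)}=1$. The failure of the right-hand inequality is precisely the event $\klam{R^{(1)}_k P(X_1^k)\ge 1/\rho_k}$, so Markov's inequality bounds its probability by $\rho_k\,\mean\okra{R^{(1)}_k P(X_1^k)}=\rho_k$. Since $\sum_k\rho_k<\infty$, Borel--Cantelli gives $R^{(1)}_k P(X_1^k)\rho_k<1$ for all large $k$ almost surely, which after taking logarithms is the asserted upper bound $\log R^{(1)}_k<-\log P(X_1^k)-\log\rho_k$.

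For the lower bound I would set $W_k:=\okra{R^{(1)}_k\,P(X_1^k\mid X_{k+1}^\infty)}^{-1}$ and take the total expectation in Lemma \ref{theoKontoyiannis}, which yields $\mean W_k\le 1+\log D^k$. The failure of the left-hand inequality is exactly the event $\klam{W_k\ge k/\rho_k}$, whence Markov's inequality bounds its probability by $\frac{\rho_k}{k}\okra{1+k\log D}\le\rho_k\okra{1+\log D}$, which is again summable. Borel--Cantelli then gives $W_k<k/\rho_k$ eventually, i.e.\ $R^{(1)}_k\,P(X_1^k\mid X_{k+1}^\infty)>\rho_k/k$, which is the claimed lower bound after taking logarithms. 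It is worth noting that the subtraction of $\log k$ in the statement is precisely what absorbs the linear factor $k$ inside the Kontoyiannis bound $1+\log D^k$ and keeps the Markov estimate summable.

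I expect no serious obstacle, since the entire substance of the argument has already been isolated in the two lemmas. The only genuinely clever ingredient is the counting observation underlying Lemma \ref{theoKontoyiannis}: for a fixed future $x_{k+1}^\infty$ at most one string $x_1^k$ can realize each value $R^{(1)}_k=i$, so after the conditional probability cancels the denominator of $W_k$ one is left with $\sum_{x_1^k}1/R^{(1)}_k\le\sum_{i=1}^{D^k}1/i\le 1+\log D^k$; this is supplied. What remains is purely bookkeeping: verifying that $P(X_1^k\mid X_{k+1}^\infty)>0$ almost surely and that $W_k$ is integrable (both guaranteed over a $D$-ary alphabet), checking that the stated \emph{strict} inequalities are the set-theoretic complements of the $\ge$-events to which Markov is applied, and applying Borel--Cantelli separately to each of the two families of failure events so that the two bounds hold simultaneously for all sufficiently large $k$.
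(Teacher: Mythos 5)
Your proposal is correct and follows essentially the same route as the paper: both bounds come from applying Markov's inequality to Lemma \ref{theoKac} (for the upper bound) and Lemma \ref{theoKontoyiannis} (for the lower bound), with thresholds tuned by $\rho_k^{-1}$ and $k^{-1}\rho_k$ respectively, followed by Borel--Cantelli; your choice of thresholds matches the paper's choice of the constant $C$ exactly, including the role of the $-\log k$ term in absorbing the factor $1+k\log D$.
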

\begin{proof}
  By applying the Markov inequality to Lemmas \ref{theoKac} and
  \ref{theoKontoyiannis}, we obtain
\begin{align}
  \label{ProbRecurrence}
  P\okra{R^{(1)}_k\ge \frac{C}{P(X_1^k)}}
  &\le C^{-1}
  ,
  &
  P\okra{R^{(1)}_k\le \frac{C}{P(X_1^k|X_{k+1}^{N})}}
  &\le C Z(N,D,k)
  .
\end{align}
Setting $C=\rho_k^{-1}$ in the left bound of (\ref{ProbRecurrence})
and $C=k^{-1}\rho_k$ with $N=N_k$ in the right bound we obtain
(\ref{SandwichRecurrence}) by the Borel-Cantelli lemma.
\end{proof}

The WZ/OW theorem (\ref{WZOW}) follows by Proposition
\ref{theoSandwichRecurrence} combined with the
Shannon-McMillan-Breiman (SMB) theorem \cite{Shannon48, Breiman57},
which asserts
\begin{align}
  \label{SMB}
  \lim_{k\to\infty}
  \frac{1}{k}\okra{-\log P(X_1^k)}=
  \lim_{k\to\infty}
  \frac{1}{k}\okra{-\log P(X_1^k|X_{k+1}^\infty)}=
  h_1 \text{ a.s.}
\end{align}

\section{Upper bound for the repetition time}
\label{secRepetition}

There is a result by Chen Moy that generalizes the Kac lemma to
successive recurrence times.  Let $T$ be a stationary automorphism of
the probability space.  Consider an event $B$ such that
$P(B)>0$. Define the random set
\begin{align}
  \mathcal{C}^B(\omega):=\klam{j\in\mathbb{Z}: T^{-j}\omega\in B}
\end{align}
If $P(B)>0$ then set $\mathcal{C}^B$ contains infinitely many negative
and positive integers almost surely, by the Poincar\'e recurrence
theorem. Then we can arrange the elements of set $\mathcal{C}^B$ into
the unique sorted sequence $(C^B_m)_{m\in\mathbb{Z}}$ defined by
conditions: $\bigcup_{m\in\mathbb{Z}}\klam{C^B_m}=\mathcal{C}^B$,
$C^B_m>C^B_{m-1}$ for $m\in\mathbb{Z}$, $C^B_1\ge 1$, and
$C^B_0\le 0$.  The successive recurrence times are 
\begin{align}
  \label{Successive}
  W^B_m:=C^B_m-C^B_{m-1}.
\end{align}
Observe that $(C^B_0=0)=B$. The result by Chen Moy is as follows.
\begin{lemma}[\cite{ChenMoy59}]
  \label{theoChenMoy}
  The process $(W^B_m)_{m\in\mathbb{N}}$ is stationary with respect to
  the conditional measure $P(\cdot|B)$. Moreover, if the automorphism
  $T$ is ergodic then for $m\in\mathbb{N}$, we have
  \begin{align}
    \mean\okra{W^B_m\middle|B}
    =
    \frac{1}{P(B)}.
  \end{align}
\end{lemma}

Notice that the repetition time can be expressed as
\begin{align}
  R^{(2)}_k=\min_{x_1^k} C^{(X_1^k=x_1^k)}_2.
\end{align}
Thus, for the analysis of repetition times, we would like to have a
similar bound as Lemma \ref{theoChenMoy} but for the unconditional
expectation. Unfortunately, due to the phenomenon of burstiness, Chen
Moy's result cannot be strengthened to the unconditional expectation.
\begin{example}[periodic process with burstiness]
  Consider a periodic process $(X_i)_{i\in\mathbb{Z}}$ in which the
  pattern of $n_1$ ones followed by $n_0$ zeros repeats
  cyclically. Let $B=(X_0=1)$. Then $P(C^B_0=0)=P(B)=n_1/(n_0+n_1)$
  and $P(C^B_0=-i)=1/(n_0+n_1)$ for $i=1,\ldots,n_0$. For $C_0<0$, we
  have $W^B_{kn_1+l}=1+n_0\boole{l=1}$ for
  $k\in\mathbb{N}\cup\klam{0}$ and $l=1,\ldots,n_1$. Hence
\begin{align}
  \mean W^B_{kn_1+l}&=1+\frac{n_0+n_0^2\boole{l=1}}{n_0+n_1}.
\end{align}
For $n_1=1$, we have $\mean W^B_m=1+n_0=1/P(B)$. However, process
$(W^B_m)_{m\in\mathbb{N}}$ is non-stationary for other values of
$n_1$, which correspond to a burst of $1$'s followed a single lull
that can be arbitrarily longer than $1/P(X_0=1)$.
\end{example}
\noindent
Burstiness can be particularly acute in statistical models of natural
language \cite{Katz96,AltmannPierrehumbertMotter09,Debowski15f}.

Let us try to work around this problem.  Dualities
$R^{(\gamma)}_k>n\iff L^{(\gamma)}_{n+k}<k$ are an instance of a more
general pattern that we propose to call the time-count duality.
\begin{theorem}[time-count duality, cf.\ \cite{WynerZiv89}]
  \label{theoTimeCount}
  Suppose that a function
  $F:\mathbb{N}\rightarrow\mathbb{N}\cup\klam{0}$ (called the count)
  is non-decreasing. We define the dual function
  $C:\mathbb{N}\rightarrow\mathbb{N}\cup\klam{\infty}$ (called the
  time) as
  % \begin{align}
    % \label{Time}
    $C(m):=\inf\klam{\infty}\cup\klam{n\ge 1: F(n)\ge m}$.
  % \end{align}
  We claim that:
  \begin{enumerate}
  \item $C(m)>n\iff F(n)<m$;    
  \item $F(n)=\max\klam{0}\cup\klam{m\ge 1: C(m)\le n}$;    
  \item The function $C$ is non-decreasing.
  \end{enumerate}
\end{theorem}
\noindent
\emph{Remark:} Since $C$ is non-decreasing, we may apply the dual
construction to $C$ rather than to $F$. But then we obtain a shifted
function $\inf\klam{\infty}\cup\klam{m\ge 1: C(m)\ge n}=F(n-1)+1$.
\begin{proof}
  Since $F$ is non-decreasing, condition $F(n)<m$ implies $C(m)>n$.
  On the other hand, by definition, if $F(n)\ge m$ then $C(m)\le n$.
  Thus we have equivalence $C(m)>n\iff F(n)<m$.  In view of this
  equivalence, we may write
  \begin{align}
    F(n)&=\max\klam{m\ge 0: F(n)\ge m}
    % \nonumber\\
    % &
    =\max\klam{0}\cup\klam{m\ge 1: C(m)\le n}
    .
  \end{align}
  Suppose that $m_1\le m_2$. We have
  % \begin{align}
  $\klam{n\ge 1: F(n)\ge m_1}\supset \klam{n\ge 1: F(n)\ge m_2}$.
  % \end{align}
  Since $F$ is non-decreasing, both sets are intervals. Thus
  $C(m_1)\le C(m_2)$.
\end{proof}

We observe that the time-count duality carries over to some extreme
values.
\begin{theorem}
  \label{theoExtreme}
  Let $B$ be a certain set of indices.  Suppose that functions
  $F^B$ and $C^B$ are dual for all $B\in\beta$. Then the functions
  $F:=\sup_{B\in\beta} F^B$ and $C:=\inf_{B\in\beta} C^B$ are also
  dual.
\end{theorem}
\begin{proof}
  The claim follows by equivalence
  \begin{align}
    C(m)>n
    &\iff\forall_{B\in\beta}\,C^B(m)>n
    % \nonumber\\
    % &
    \iff
      \forall_{B\in\beta}\,F^B(n)<m
      \iff F(n)<m.
  \end{align}
  Hence 
  % \begin{align}
    $C(m)=\inf\klam{\infty}\cup\klam{n\ge 1: C(m)\le n}
    % \nonumber\\
    % &
    =\inf\klam{\infty}\cup\klam{n\ge 1: F(n)\ge m}$.
  % \end{align}
\end{proof}

In the following, we denote
the number of visits to $B$ by time $n\in\mathbb{N}$ as
\begin{align}
  F^B_n(\omega):=\sum_{j=1}^n \boole{T^{-j}\omega\in B}.
\end{align}
We have the time-count duality for times
$C^B_m=\inf\klam{\infty}\cup\klam{n\ge 1: F^B_n\ge m}$ and counts
$F^B_n=\max\klam{0}\cup\klam{m\ge 1: C^B_m\le n}$.  Thus we have
equivalence $C^B_m>n\iff F^B_n<m$ by Theorem \ref{theoTimeCount}.
What is useful, counts $F^B_n$ are easier to bound than times $C^B_m$.

Notice that we have expectation $\mean F^B_n=nP(B)$ if the process is
stationary. As a result, despite the phenomenon of burstiness, we can
supply a simple lower bound for the expectation of times $C^B_m$---by
their duality to counts $F^B_n$.
\begin{lemma}
  \label{theoUnconditional}
  If the automorphism $T$ is stationary then for $m\in\mathbb{N}$, we
  have
  \begin{align}
    \mean C^B_m
    \ge
    \frac{m}{2P(B)}.
  \end{align}
\end{lemma}
\begin{proof}
  In view of equivalence $C^B_m>n\iff F^B_n<m$, we have
  $P(C^B_m>n)=P(F^B_n<m)$.  Hence, we may bound the expectation of
  $C^B_m$, using the Markov inequality as follows
  \begin{align}
    \mean C^B_m
    &=
    \int_0^\infty P(C^B_m>\floor{n}) dn
    =
    \int_0^\infty P(F^B_{\floor{n}}<m) dn
      \nonumber\\
    &=
      \int_0^\infty \kwad{1-P(F^B_{\floor{n}}\ge m)} dn
      \ge
    %   \int_0^\infty \kwad{1-\frac{\mean F^B_n}{m}}_+ dn
    %   \nonumber\\
    % &\ge
    %   \int_0^\infty \kwad{1-\frac{nP(B)}{m}}_+ dn
    %   =
      \int_0^{m/P(B)} \kwad{1-\frac{nP(B)}{m}} dn
      % \nonumber\\
      % &
      =
      \frac{m}{2P(B)}.
  \end{align}
\end{proof}

In the following, we will bound the expectation of random variable
$\min_{B\in\beta} C^B_m$, where $\beta$ is a finite partition.  This
will be achieved under a uniform bound
$\var F^B_n\le\gamma \mean F^B_n$ for a $\gamma<\infty$.  Since random
variable $\min_{B\in\beta} C^B_m$ is surely bounded for a finite
partition $\beta$, in the following proof, we can circumvent a problem
with insummability of the harmonic series and we obtain a bound that
is not void.

\begin{lemma}
  \label{theoShortMemory}
  Assume a stationary automorphism $T$ and a finite partition $\beta$.
  Suppose that
  \begin{align}
    \gamma:=
    \sup_{n\in\mathbb{N}}\max_{B\in\beta}\frac{\var F^B_n}{\mean F^B_n}<\infty.
  \end{align}
  Then for $m>1$, we have
  \begin{align}
    \mean \min_{B\in\beta} C^B_m\le
    \frac{1}{\max_{B\in\beta} P(B)}\okra{
    2m+\gamma\log\frac{(m-1)\card\beta+\gamma}{\gamma}
    }+1
    .
  \end{align}
\end{lemma}
\begin{proof}
  As a consequence of the Paley-Zygmund inequality for a random
  variable $Y\ge 0$, we obtain
  \begin{align}
    P(Y\le y)\le\frac{\var Y}{\var Y+(\mean Y-y)_+^2}.
  \end{align}
  Let $N\in\mathbb{N}$ be a constant. Then the duality yields
  \begin{align}
    \mean \min\klam{C^B_m,N}
    &=
    \int_0^{N} P(C^B_m>\floor{n}) dn
    =
      \int_0^{N} P(F^B_{\floor{n}}<m) dn
      \nonumber\\
    &\le \int_0^{N} \frac{\var F^B_{\floor{n}}
      dn}{\var F^B_{\floor{n}}+(\mean F^B_{\floor{n}}-m)_+^2}
    % \nonumber\\
    % &
    \le \int_0^{N} \frac{\gamma \floor{n}P(B)
      dn}{\gamma \floor{n}P(B)+(\floor{n}P(B)-m)_+^2}
      \nonumber\\
    &\le \int_0^{N}
      \frac{\gamma dn}{((n-1)P(B)-2m)_++\gamma}
    %   \nonumber\\
    % &\le \frac{2m}{P(B)}+1+
    %   \frac{\gamma}{P(B)}\log\frac{((N-1)P(B)-2m)_++\gamma}{\gamma}
    %  \nonumber\\
    % &
      \le \frac{1}{P(B)}\okra{2m+
      \gamma\log\frac{N-1+\gamma}{\gamma}}+1
      .
  \end{align}
  On the other hand, almost surely, we have
  %\begin{align}
  $\min_{B\in\beta} C^B_m\le N:=(m-1)\card\beta+1$
  %\end{align}
  since in $(m-1)\card\beta+1$ consecutive draws from set $\beta$
  there must be an element $B\in\beta$ that is repeated $m$ times.
  Consequently, we may write
  \begin{align}
    \mean \min_{B\in\beta} C^B_m
    &\le \mean \min\klam{\min_{B\in\beta} C^B_m,N}
      \le \min_{B\in\beta} \mean \min\klam{C^B_m, N}
      .
  \end{align}
  Hence we obtain the claim.
\end{proof}

Consequently, we can derive an upper bound for the repetition time.
\begin{proposition}
  \label{theoSandwichRepetitionOneOne}
  Consider a stationary process $(X_i)_{i\in\mathbb{Z}}$ over a
  $D$-ary alphabet.  Assume that $\sum_{k=1}^\infty \rho_k<\infty$ for
  some $\rho_k>0$.  Almost surely, for sufficiently large $k$, we have
  \begin{align}
    \label{SandwichRepetitionOneOne}
    \log R^{(2)}_k<H_\infty(X_1^k)+\log
    \okra{4+\gamma_k\log\frac{D^k+\gamma_k}{\gamma_k}}
    -\log\rho_k
    ,
  \end{align}
  where coefficients $\gamma_k$ are defined in (\ref{GammaDelta}).
\end{proposition}
\begin{proof}
  In view of the Markov inequality and Lemma \ref{theoShortMemory},
  we may bound
  \begin{align}
    P(R^{(2)}_k\ge n)
    &\le \frac{\mean R^{(2)}_k}{n}
      \le \frac{\mean \min_{x_1^k} C^{(X_1^k=x_1^k)}_2}{n}
      % \nonumber\\
      % &
       \le \frac{e^{H_\infty(X_1^k)}
      \okra{4+\gamma_k\log\frac{D^k+\gamma_k}{\gamma_k}}+1}{n}.
  \end{align}
  Define
  $n_k:=e^{H_\infty(X_1^k)}
  \okra{4+\gamma_k\log\frac{D^k+\gamma_k}{\gamma_k}}\rho_k^{-1}$. Then
  we have $\sum_{k=1}^\infty P(R^{(2)}_k\ge n_k)<\infty$. Hence
  inequality (\ref{SandwichRepetitionOneOne}) is obtained by the
  Borel-Cantelli lemma.
\end{proof}

\section{Lower bound for the repetition time}
\label{secYetAnother}

The following proposition refines \cite[Theorem 9.25]{Debowski21} by
using the weighted conditional entropy rather than the min-entropy
conditioned on the infinite future.
\begin{proposition}
  \label{theoSandwichRepetitionTwo}
  Consider a stationary process $(X_i)_{i\in\mathbb{Z}}$ over a
  countable alphabet. Assume that $\sum_{k=1}^\infty \rho_k<\infty$
  for some $\rho_k>0$. Define the weighted conditional entropy
  \begin{align}
    H_-(X_1^k|X_{k+1}^\infty):=-\log \okra{
    \sum_{i=1}^\infty\frac{e^{-H_\infty(X_1^k|X_{k+1}^{k+i})}}{i(i+1)}}.
  \end{align}
  Almost surely, for sufficiently large $k$, we have
  \begin{align}
    \label{SandwichRepetitionTwo}
    \log R^{(2)}_k> \frac{1}{3}H_-(X_1^k|X_{k+1}^\infty)
    +\frac{1}{3}\log\rho_k
    .
  \end{align}
\end{proposition}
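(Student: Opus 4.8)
The plan is to bound $P(R^{(2)}_k\le n)$ from above by a union bound over colliding pairs and then invoke the Borel--Cantelli lemma, mirroring the structure of Proposition \ref{theoSandwichRecurrence}. Writing $q_d:=e^{-H_\infty(X_1^k|X_{k+1}^{k+d})}$ and noting that $R^{(2)}_k\le n$ holds precisely when two of the blocks $X_{j+1}^{j+k}$ with $0\le j\le n$ coincide, I would start from
\begin{align}
  P(R^{(2)}_k\le n)\le \sum_{0\le j<i\le n} P\big(X_{j+1}^{j+k}=X_{i+1}^{i+k}\big)
  =\sum_{d=1}^n (n+1-d)\, P\big(X_1^k=X_{d+1}^{d+k}\big),
\end{align}
where the equality groups the pairs by their gap $d=i-j$ and uses stationarity; note that only stationarity, not ergodicity, enters.

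The technical heart is the bound $P(X_1^k=X_{d+1}^{d+k})\le q_d$ for every $d\ge 1$. The key observation is that on the event $\{X_1^k=X_{d+1}^{d+k}\}$ one has $X_m=X_{m+d}$ for $1\le m\le k$, so each coordinate $X_p$ with $1\le p\le k$ can be propagated by repeated shifts of $d$ into the window $\{k+1,\dots,k+d\}$; hence $X_1^k$ equals a deterministic function $\phi_d(X_{k+1}^{k+d})$ of the length-$d$ context. This treats the overlapping gaps $d<k$ (periodicity) and the non-overlapping gaps $d\ge k$ uniformly, and conditioning on $X_{k+1}^{k+d}$ gives $P(X_1^k=X_{d+1}^{d+k})\le \mean\max_x P(X_1^k=x|X_{k+1}^{k+d})=q_d$.

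The step that produces the exponent $\tfrac13$ is to factor each summand as $(n+1-d)\,d(d+1)\cdot \frac{q_d}{d(d+1)}$ and pull out the maximum of the polynomial weight, using $\max_{1\le d\le n}(n+1-d)d(d+1)\le 2n^3$ together with $\sum_{d=1}^\infty \frac{q_d}{d(d+1)}=e^{-H_\bullet(X_1^k|X_{k+1}^\infty)}$. This yields the degree-three estimate
\begin{align}
  P(R^{(2)}_k\le n)\le 2n^3\, e^{-H_\bullet(X_1^k|X_{k+1}^\infty)}.
\end{align}
I would then set $n_k:=\big(\rho_k\, e^{H_\bullet(X_1^k|X_{k+1}^\infty)}\big)^{1/3}$, so that $P(R^{(2)}_k\le n_k)\le 2\rho_k$ is summable; Borel--Cantelli forces $R^{(2)}_k>n_k$ for all large $k$ almost surely, which is exactly (\ref{SandwichRepetitionTwo}) since $\log n_k=\tfrac13 H_\bullet(X_1^k|X_{k+1}^\infty)+\tfrac13\log\rho_k$.

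I expect the main obstacle to be the uniform collision bound $P(X_1^k=X_{d+1}^{d+k})\le q_d$: the naive picture controls only the non-overlapping regime $d\ge k$, and one must notice that the overlapping regime is governed by the same length-$d$ conditional min-entropy through the periodicity argument. The second delicate point is the combinatorial optimization $\max_d (n+1-d)d(d+1)=O(n^3)$; bounding each $q_d$ crudely by $e^{-H_\bullet(X_1^k|X_{k+1}^\infty)}d(d+1)$ and summing term by term would instead give $O(n^4)$ and only the weaker exponent $\tfrac14$, so the factoring-then-maximizing order is what secures the stated constant.
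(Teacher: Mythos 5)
Your proposal is correct and takes essentially the same route as the paper: a union bound over colliding pairs grouped by gap $d$, the collision bound $P(X_1^k=X_{d+1}^{d+k})\le e^{-H_\infty(X_1^k|X_{k+1}^{k+d})}$ obtained by conditioning on the length-$d$ window $X_{k+1}^{k+d}$, a cubic estimate on the polynomial weights, and Borel--Cantelli with $n_k=\sqrt[3]{\rho_k e^{H_\bullet(X_1^k|X_{k+1}^\infty)}}$. The only differences are cosmetic: you spell out the periodicity argument that handles overlapping gaps $d<k$ (which the paper leaves implicit in its one-line conditioning step), and you bound $\max_{1\le d\le n}(n+1-d)d(d+1)\le 2n^3$ directly where the paper uses the inequality $3-2x\le x^{-2}$ to reach the constant $8/27$ instead of $2$.
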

\noindent
\emph{Remark:} Since
$H_-(X_1^k|X_{k+1}^\infty)\ge H_\infty(X_1^k|X_{k+1}^\infty)$,
(\ref{SandwichRepetitionTwo}) sharpens \cite[Theorem
9.25]{Debowski21}.
\begin{proof}
  For $i<j$, we obtain 
\begin{align}
  P(X_{i+1}^{i+k}=X_{j+1}^{j+k})
  &=
    \mean
    P(X_{i+1}^{i+k}=X_{j+1}^{j+k}|X_{i+k+1}^{j+k})
    \nonumber\\
  &\le
    \mean \max_{x_1^k} P(X_{i+1}^{i+k}=x_1^k|X_{i+k+1}^{j+k})
    % \nonumber\\
    % &
    =
    e^{-H_\infty(X_1^k|X_{k+1}^{k+j-i})}
    .
\end{align}
Observe that $3-2x\le x^{-2}$, so $n-x\le 4n^3x^{-2}/27$.
%$n-x=n/3(3-2(3x/2n))\le n/3(3x/2n)^{-2}=4n^3/27x^2$.
Hence, by the union bound,
\begin{align}
  P(R^{(2)}_k\le n)
  &= P\okra{X_{i+1}^{i+k}=X_{j+1}^{j+k} \text{ for some }
    0\le i<j\le n}
  %   \nonumber\\
  % &\le \sum_{0\le i<j\le n} P(X_{i+1}^{i+k}=X_{j+1}^{j+k})
    \le \sum_{0\le i<j\le n}  e^{-H_\infty(X_1^k|X_{k+1}^{k+j-i})}
    \nonumber\\
  &= \sum_{i=1}^{n-1}
    \frac{n-i}{e^{H_\infty(X_1^k|X_{k+1}^{k+i})}}
    \le \frac{8n^3}{27}
    \sum_{i=1}^\infty\frac{e^{-H_\infty(X_1^k|X_{k+1}^{k+i})}}{i(i+1)}
    % \nonumber\\
    % &
    \le \frac{8n^3}{27} e^{-H_-(X_1^k|X_{k+1}^\infty)}.
\end{align}
Put $n_k:=\sqrt[3]{\rho_k e^{H_-(X_1^k|X_{k+1}^\infty)}}$. Then
$\sum_{k=1}^\infty P(R^{(2)}_k\le n_k)<\infty$ and
(\ref{SandwichRepetitionTwo}) follows by the Borel-Cantelli lemma.
\end{proof}

The goal of the second proposition in this section is to show that the
bound in terms of the weighted conditional entropy is more precise
than the bound proved in \cite[Theorem 8]{Debowski18b} since the
length of conditioning can be shortened to $I_k$ rather than $D^k$.
\begin{proposition}
  \label{theoSandwichContext}
  Consider a stationary process $(X_i)_{i\in\mathbb{Z}}$ over a
  countable alphabet.  For
  $I_k:=\min\klam{i:\log i\ge H_\infty(X_1^k|X_{k+1}^{k+i})}$, we have
\begin{align}
  \label{SandwichContextOne}
  \log I_k-\log 2&\le H_-(X_1^k|X_{k+1}^\infty)\le
  3\log I_k+\frac{1}{I_k}
  ,
  \\
  \label{SandwichContextThree}
  \log (I_k-1)-H_0(X_i)&\le H_\infty(X_1^k|X_{k+1}^{k+I_k})\le \log I_k
                              .
\end{align}
\end{proposition}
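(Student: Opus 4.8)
The plan is to reduce everything to the monotonicity of the conditional min-entropies $a_i:=H_\infty(X_1^k|X_{k+1}^{k+i})$ viewed as functions of the conditioning length $i$. By the first inequality of the chain rule (\ref{Chain}) (conditioning on extra symbols cannot increase the R\'enyi-Arimoto entropy), the sequence $(a_i)_{i\ge 1}$ is non-increasing, whereas $\log i$ is strictly increasing. Hence $I_k$ from (\ref{ContextLength}) is the unique crossover index, and its definition as a minimum gives exactly the two facts I will use repeatedly: $a_{I_k}\le\log I_k$, and $\log i< a_i$ for every $i<I_k$. Monotonicity strengthens the latter to $a_i\ge a_{I_k-1}>\log(I_k-1)$ for all $i<I_k$, and this is essentially the entire input.

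For the two bounds on $H_\bullet$ in (\ref{SandwichContextOne}) I write $S:=\sum_{i\ge 1}e^{-a_i}/(i(i+1))$, so $H_\bullet(X_1^k|X_{k+1}^\infty)=-\log S$, and I estimate $S$ from both sides. For the upper bound on $H_\bullet$ I keep only the single term $i=I_k$: since $a_{I_k}\le\log I_k$ gives $e^{-a_{I_k}}\ge 1/I_k$, we get $S\ge 1/(I_k^2(I_k+1))$, and taking $-\log$ together with $\log(I_k+1)\le\log I_k+1/I_k$ yields $H_\bullet\le 3\log I_k+1/I_k$. For the lower bound on $H_\bullet$ I split $S$ at $I_k$: on the tail $i\ge I_k$ I bound $e^{-a_i}\le 1$ and telescope $\sum_{i\ge I_k}1/(i(i+1))=1/I_k$; on the head $i<I_k$ I use $e^{-a_i}<1/(I_k-1)$ together with $\sum_{i<I_k}1/(i(i+1))=1-1/I_k$, so that the head also contributes less than $1/I_k$. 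Adding the two pieces gives $S<2/I_k$, hence $H_\bullet>\log I_k-\log 2$.

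For (\ref{SandwichContextThree}), the upper bound $H_\infty(X_1^k|X_{k+1}^{k+I_k})\le\log I_k$ is simply $a_{I_k}\le\log I_k$, immediate from the definition of $I_k$. For the lower bound I pass from conditioning length $I_k-1$ to $I_k$ via the last inequality of (\ref{Chain}), which after rearrangement reads $H_\infty(X_1^k|X_{k+1}^{k+I_k})\ge H_\infty(X_1^k|X_{k+1}^{k+I_k-1})-H_0(X_{k+I_k}|X_{k+1}^{k+I_k-1})$. Bounding the Hartley term by $H_0(X_{k+I_k})=H_0(X_i)$ (stationarity, and conditioning does not increase $H_0$) and invoking $a_{I_k-1}>\log(I_k-1)$ gives $H_\infty(X_1^k|X_{k+1}^{k+I_k})>\log(I_k-1)-H_0(X_i)$.

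The calculations are routine; the two points requiring care are getting the direction of monotonicity right, so that the defining inequality of $I_k$ at the single index $I_k-1$ propagates through the whole head of the series, and the bookkeeping of the partial telescoping sums $\sum 1/(i(i+1))$ that produces the clean constants $\log 2$, $3$, and $1/I_k$. The only genuine edge case is $I_k=1$, where the head sum is empty and $\log(I_k-1)=-\infty$; I will note that each bound then holds trivially, so no separate argument is needed.
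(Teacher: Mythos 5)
Your proof is correct and follows essentially the same route as the paper: the same single-term lower bound on the series $\sum_i e^{-H_\infty(X_1^k|X_{k+1}^{k+i})}/(i(i+1))$ for the upper bound in (\ref{SandwichContextOne}), the same split of that series at $I_k$ (using monotonicity of $i\mapsto H_\infty(X_1^k|X_{k+1}^{k+i})$ and the telescoping sums) for the lower bound, and the same chain-rule-plus-stationarity argument for (\ref{SandwichContextThree}). Your explicit handling of the edge case $I_k=1$ is a minor refinement that the paper leaves implicit.
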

\noindent
\emph{Remark:} We note that $I_k\le D^k$ and
$H_\infty(X_1^k|X_{k+1}^{k+I_k})\ge H_\infty(X_1^k|X_{k+1}^{k+D^k})$
for a $D$-ary alphabet. Hence Propositions
\ref{theoSandwichRepetitionTwo} and \ref{theoSandwichContext} combined
are stronger than \cite[Theorem 8]{Debowski18b}.
\begin{proof}
First, we derive (\ref{SandwichContextOne}) from
\begin{align}
  H_-(X_1^k|X_{k+1}^\infty)
  &\le
    H_\infty(X_1^k|X_{k+1}^{k+I_k})+\log I_k+\log (I_k+1)\le
    3\log I_k+\frac{1}{I_k},
  \\
  H_-(X_1^k|X_{k+1}^\infty)
  &\ge
    -\log\okra{\okra{1-\frac{1}{I_k}}e^{-H_\infty(X_1^k|X_{k+1}^{k+I_k-1})}
    +\frac{1}{I_k}}
    \nonumber\\
  &\ge -\log\okra{\okra{1-\frac{1}{I_k}}\frac{1}{I_k-1}+\frac{1}{I_k}}
    =\log I_k-\log 2.
\end{align}
Moreover, for a stationary process,
%\begin{align}
$0 \le
H_\infty(X_1^k|X_{k+1}^{k+I_k-1})-H_\infty(X_1^k|X_{k+1}^{k+I_k})\le
H_0(X_i)$
%\end{align}
since by the chain rule (\ref{Chain}), we have
%\begin{align}
  $H_\infty(Y,X|Z,V)\le H_\infty(Y,X|Z)
  % &\le H_0(X|Z)+H_\infty(Y|X,Z)
  % \nonumber\\
  % &
  \le H_0(X)+H_\infty(U,Y|X,Z)$. 
%\end{align}
Hence we obtain (\ref{SandwichContextThree}).
\end{proof}

\section{Final touch}
\label{secFinal}

Let us recall some properties of Hilberg exponents.  The probabilistic
results of \cite[Section 8.1]{Debowski21} can be summarized as
follows.  Let $T$ be an automorphism of the probability space.  A
sequence of random variables $(J_k)_{k\in\mathbb{N}}$ is called
$T$-increasing if
% \begin{align}
$J_{k+1}\ge J_k\circ T,J_k\ge 0$.
% \end{align}
For example, for the shift automorphism $X_i\circ T=X_{i+1}$,
sequences $(L^{(\gamma)}_n)_{n\in\mathbb{N}}$,
$(R^{(\gamma)}_k)_{k\in\mathbb{N}}$,
$(-\log P(X_1^k))_{k\in\mathbb{N}}$, and
$(-\log P(X_1^k|X_{k+1}^\infty))_{k\in\mathbb{N}}$ are
$T$-increasing. Subsequently, let
$\median X:=\sup \klam{r: P(X<r)\le \frac{1}{2}}$ denote the median of
a random variable $X$.
\begin{lemma}
  \label{theoHilberg}
  For a stationary ergodic automorphism $T$ and a $T$-increasing
  sequence $(J_k)_{k\in\mathbb{N}}$, we have
  \begin{align}
    \hilberg_{k\rightarrow\infty} \median J_k
    \le \hilberg_{k\rightarrow\infty} J_k 
    \le \hilberg_{k\rightarrow\infty} \mean J_k
    \text{ a.s.}, 
  \end{align}
  where random variable $\hilberg_{k\rightarrow\infty} J_k$ is almost
  surely constant. Moreover, the inequalities become equalities if
  \begin{align}
    \limsup_{k\rightarrow\infty} \frac{\sqrt{\var J_k}}{\mean J_k}<1.
  \end{align}
\end{lemma}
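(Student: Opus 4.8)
The plan is to prove the three assertions in turn: first that $\hilberg_{k\to\infty} J_k$ is almost surely constant, then the two sandwich inequalities, and finally the collapse to equality under the variance condition. For the constancy I would show that $\hilberg_{k\to\infty} J_k$ is $T$-invariant. Since $J_k\circ T\le J_{k+1}$ and $\log(k+1)/\log k\to 1$, applying $\max\klam{0,\,\cdot/\log k}$ and then $\limsup_k$ gives $\okra{\hilberg_{k\to\infty} J_k}\circ T\le \hilberg_{k\to\infty} J_k$. As $T$ is measure preserving, the two sides are equidistributed, and a random variable dominated almost surely by an equidistributed copy of itself must coincide with it almost surely; hence $\hilberg_{k\to\infty} J_k$ is $T$-invariant and, by ergodicity, equal to a constant $g$ almost surely.

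For the left inequality I would exploit this constancy directly. Fix $\epsilon>0$; since $g$ is constant, almost surely $J_k\le k^{g+\epsilon}$ for all large $k$, so $P\okra{J_k\ge k^{g+\epsilon}}\to 0$. If $\hilberg_{k\to\infty}\median J_k>g$, there would be $\delta>\epsilon$ and a subsequence with $\median J_{k_i}\ge k_i^{g+\delta}$; by the definition of the median this forces $P\okra{J_{k_i}\ge k_i^{g+\delta}}\ge \tfrac12$, contradicting the vanishing of these probabilities. Therefore $\hilberg_{k\to\infty}\median J_k\le g$.

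The right inequality is the crux and the step I expect to be the main obstacle. The natural route is Markov's inequality together with the Borel--Cantelli lemma along a geometric grid $k_j$, for which $P\okra{J_{k_j}\ge k_j^{\beta+\epsilon}}\le \mean J_{k_j}/k_j^{\beta+\epsilon}$ is summable whenever $\beta=\hilberg_{k\to\infty}\mean J_k$. This controls $J_k$ only along the grid, and interpolation to intermediate $k$ must use $T$-increasingness in the form $J_k\le J_{k_{j+1}}\circ T^{-(k_{j+1}-k)}$. Here the shift grows with $k$, so a crude union bound over the $k_{j+1}-k_j$ shifts inside a block loses a polynomial factor and only yields the weaker $\hilberg_{k\to\infty} J_k\le\beta+1$. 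To remove this loss I would pass to the monotone surrogate $U_k:=J_k\circ T^{-k}$, which satisfies $U_{k+1}\ge U_k$ pointwise and is equidistributed with $J_k$, so the geometric-subsequence argument applies to $U_k$ without any shift and gives $\hilberg_{k\to\infty} U_k\le\beta$ cleanly; the delicate remaining point is to transfer this bound back to $J_k=U_k\circ T^{k}$, which I would attempt through the pointwise ergodic theorem applied to the monotone family $\klam{U_k}_{k\in\mathbb{N}}$ rather than through a first-moment estimate.

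Finally, for the equality under $\limsup_k \sqrt{\var J_k}/\mean J_k<1$, I would upgrade the one-sided control to two-sided concentration by Chebyshev's inequality, $P\okra{\abs{J_k-\mean J_k}\ge\delta\,\mean J_k}\le \var J_k/\okra{\delta\,\mean J_k}^2$. Choosing $\delta$ strictly between $\limsup_k\sqrt{\var J_k}/\mean J_k$ and $1$ and summing along a geometric subsequence makes the lower tail summable, so almost surely $J_{k_j}\ge (1-\delta)\mean J_{k_j}$ eventually; this gives $\hilberg_{k\to\infty} J_k\ge\hilberg_{k\to\infty}\mean J_k$, which combined with the sandwich forces all three exponents to agree.
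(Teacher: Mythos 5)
Your constancy argument and your bound $\hilberg_{k\rightarrow\infty}\median J_k\le\hilberg_{k\rightarrow\infty}J_k$ are correct (the paper itself does not spell these out, deferring to \citep[Theorems 8.3--8.5]{Debowski21}). But the two remaining steps do not hold up. For the right-hand inequality you candidly flag the crux and then leave it unresolved, and the proposed rescue does not work: from $\hilberg_{k\rightarrow\infty}U_k\le\hilberg_{k\rightarrow\infty}\mean J_k$ a.s.\ for the monotone surrogate $U_k:=J_k\circ T^{-k}$ you cannot conclude anything about $\hilberg_{k\rightarrow\infty}J_k$, because a Hilberg exponent is a limsup functional of the \emph{joint} law of the whole sequence, while $U_k$ and $J_k$ only share marginals; the a.s.\ event $\klam{U_k\le k^{\beta+\epsilon}\text{ eventually}}$ has a random threshold, and composing with the $k$-dependent shifts $T^k$ destroys it (indeed $\hilberg_{k\rightarrow\infty}U_k$ need not even be a.s.\ constant, since $(U_k)$ is monotone but not $S$-increasing for any fixed ergodic $S$). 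Invoking ``the pointwise ergodic theorem applied to the monotone family'' is not an argument here: Birkhoff controls time averages of a \emph{fixed} function along an orbit, whereas you need to control $U_k$ at the single orbit point $T^k\omega$ for every $k$, which is exactly the union-bound problem you identified. So the central inequality $\hilberg_{k\rightarrow\infty}J_k\le\hilberg_{k\rightarrow\infty}\mean J_k$ remains unproved in your proposal; this is the step the paper outsources entirely to \citep[Theorem 8.4]{Debowski21}.

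The final step contains an outright error rather than a gap. Under $\limsup_{k}\sqrt{\var J_k}/\mean J_k=\rho<1$, Chebyshev gives $P\okra{J_k\le(1-\delta)\mean J_k}\le\var J_k/(\delta\mean J_k)^2$, which for $\delta\in(\rho,1)$ tends to the \emph{constant} $(\rho/\delta)^2<1$; these bounds are not summable along any subsequence, so Borel--Cantelli cannot deliver ``$J_{k_j}\ge(1-\delta)\mean J_{k_j}$ eventually a.s.'' (and that statement is false in general). The paper's route is deterministic and avoids this entirely: by the median--mean inequality $\abs{\median X-\mean X}\le\sqrt{\var X}$ one gets $\median J_k\ge(1-\rho')\mean J_k$ for large $k$ and any $\rho'\in(\rho,1)$, hence $\hilberg_{k\rightarrow\infty}\median J_k\ge\hilberg_{k\rightarrow\infty}\mean J_k$, and the already-established sandwich forces all three exponents to coincide; this is also why the hypothesis can be taken with constant $1$ rather than what plain Chebyshev would require. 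If you insist on a probabilistic version of your idea, the correct repair is reverse Fatou (not Borel--Cantelli): along a deterministic subsequence realizing $\hilberg_{k\rightarrow\infty}\mean J_k$, the events $\klam{J_k\ge(1-\delta)\mean J_k}$ have probabilities bounded away from $0$, so their limsup has positive probability, and the a.s.\ constancy from your first step upgrades the resulting lower bound on $\hilberg_{k\rightarrow\infty}J_k$ to an almost sure one.
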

\begin{proof}
  All claims but the last one follow by \cite[Theorems 8.3--8.5]{Debowski21}.
  The last claim is a slight strengthening of \cite[Theorem
  8.6]{Debowski21} and in a similar fashion it follows by the
  median-mean inequality $\abs{\median X-\mean X}\le \sqrt{\var X}$,
  see e.g.\ \cite[Theorem 3.17]{Debowski21}.
\end{proof}

Thus Theorem \ref{theoMainOne} follows by Proposition
\ref{theoSandwichRecurrence} and Lemma \ref{theoHilberg}, Theorem
\ref{theoMainTwo} follows by Propositions
\ref{theoSandwichRepetitionOneOne}, and Theorem \ref{theoMainThree}
follows by Propositions
\ref{theoSandwichRepetitionTwo}--\ref{theoSandwichContext}.

\bibliographystyle{abbrvnat}

\bibliography{0-journals-abbrv,0-publishers-abbrv,ai,ql,mine,tcs,books,nlp}

\section*{Acknowledgment}

I thank the reviewers for stimulating reviews that encouraged me to
inspect various short memory conditions from the stance of summability
of different mixing rates.

\end{document}